\newtheorem{theorem}{Theorem}
\theoremstyle{definition}
\newtheorem*{remark}{Remark}
\newcommand{\ket}[1]{|#1\rangle}
\begin{document}
\date{\today}
\title{\Large {\bf Local Transformations Requiring Infinite Rounds of Classical Communication}}
\author{Eric Chitambar}
\affiliation{Center for Quantum Information and Quantum Control (CQIQC),
Department of Physics and Department of Electrical \& Computer Engineering,
University of Toronto, Toronto, Ontario, M5S 3G4, Canada}

\begin{abstract}
In this paper, we study the number of rounds of communication needed to implement certain tasks by local quantum operations and classical communication (LOCC).  We find that the class of LOCC operations becomes strictly more powerful as more rounds of classical communication are permitted.  Specifically, for every $n$, there always exists an $n$ round protocol that is impossible to implement in $n-2$ rounds.  Furthermore, we show that certain entanglement transformations are possible if and only if the protocol uses an infinite (unbounded) number of rounds.  Interestingly, the number of rounds required to deterministically distill bipartite entanglement from a single multipartite state can be strongly discontinuous with respect to the amount of entanglement distilled.
\end{abstract}

\maketitle

Despite its importance to quantum communication \cite{Nielsen-2000a} and entanglement theory \cite{Plenio-2007a}, the class of Local Operations with Classical Communication (LOCC) is still not satisfactorily understood.  For instance, very little is known about what new operational possibilities become available using LOCC as more rounds of measurement and communication are performed.  If one allows the system dimensions to vary, Xin and Duan have constructed a collection of states in $m\otimes n$ systems that need at least $2\min\{m,n\}-2$ rounds of classical communication in order to be perfectly distinguished \cite{Xin-2008a}.  However, for a system of fixed dimensions, it is unclear how the power of LOCC depends on the number of turns the parties take in the measurement and communication process.  Bennett \textit{et al.} have shown that for the task of distilling EPR pairs from some mixed state, two-round communication between the parties is strictly more powerful than one-round \cite{Bennett-1996a}.  On the other hand, a well-known result of Lo and Popescue says that that every bipartite pure state transformation can be reduced to a single round of measurement and outcome broadcast \cite{Lo-1997a}.   

In this paper, we consider pure state transformations within systems of three qubits and find that the necessary number of rounds is vastly different than in the bipartite scenario.  The specific LOCC multi-outcome transformation we study is the following:
\begin{equation}
\label{Eq:trans}
\ket{W}\to\begin{cases}\ket{\Phi^{(AB)}}\;\;\text{with probability}\; p_{AB},\\
\ket{\Phi^{(AC)}}\;\;\text{with probability}\; p_{AC},\\
\ket{\varphi^{(BC)}}\;\;\text{with probability}\; p_{BC}
\end{cases}
\end{equation} 
where $\ket{W}=\sqrt{1/3}\left(\ket{100}+\ket{010}+\ket{001}\right)$, $\ket{\Phi^{ij}}$ is a maximally entangled two-qubit state shared between parties $i$ and $j$, and $\ket{\varphi^{(BC)}}$ is some entangled state held by Bob and Charlie.  This transformation is known as a random distillation since the two parties who end up sharing entanglement is unspecified before the LOCC process.  When $\ket{\varphi^{(BC)}}$ is also a maximally entangled state, transformation \eqref{Eq:trans} is the exact problem studied by Fortescue and Lo \cite{Fortescue-2007a}.  There, the authors showed that for any $\epsilon>0$, the above process can be accomplished with probability $p_{AB}+p_{BC}+p_{AC}>1-\epsilon$.  Here, we are concerned exclusively with deterministic transformations where $p_{AB}+p_{BC}+p_{AC}=1$ and ask how the amount of entanglement in $\ket{\varphi^{(BC)}}$ affects the number of LOCC rounds needed to accomplish the transformation.  

\begin{figure}[b]
\includegraphics[scale=0.5]{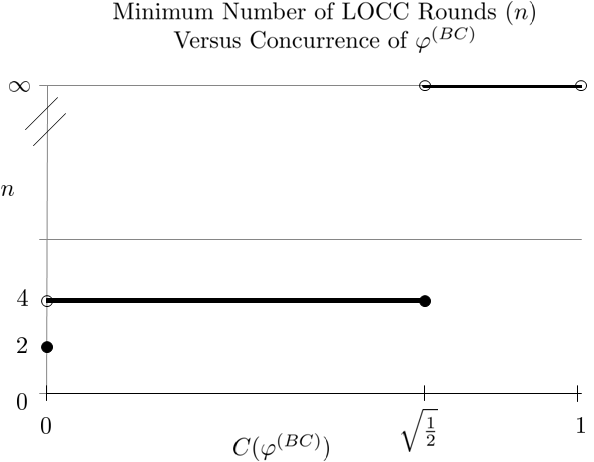}% Here is how to import EPS art
\caption{\label{roundplot}
The minimum number of rounds needed to perform transformation \eqref{Eq:trans} as a function of the entanglement in $\ket{\varphi^{(BC)}}$ (measured by the concurrence) when $p_{AB}+p_{AC}+p_{BC}=1$ and $p_{ij}>0$ for all pairs.  Here the point $\infty$ means that an unbounded number of rounds are required to achieve the transformation.} 
\end{figure}

We are able to prove that the minimum number of LOCC rounds required to perform this transformation is strongly discontinuous with respect to the entanglement of $\ket{\varphi^{(BC)}}$.  Furthermore, once the entanglement of $\ket{\varphi^{(BC)}}$ gets too large, an infinite number of LOCC rounds is necessary to complete the transformation with probability one (see Fig. \ref{roundplot}).  In addition, we are able to show that for any $n$, there exists transformations that require at least $n+2$ number of rounds, and thus LOCC becomes strictly stronger as more rounds of measurement and communication are permitted.  We turn now to a brief overview of the terminology and notation used in this paper.

An LOCC operation consists of each party taking turns to perform a ``local'' measurement on his/her subsystem and publicly announcing the result so to possibly affect the particular choice of future measurements.  A single measurement performed by one of the parties and the subsequent broadcast of that result constitutes one \textbf{round} in the LOCC operation.  We will say that any local unitary (LU) operation \textit{does not} consume one round of action.  Thus a bit more formally we can define an LOCC protocol as a fixed set of instructions that (i) identify a single party as the acting agent in each round, (ii) specify the particular measurement that party is to perform given the measurement outcomes in all previous rounds, and (iii) describe any LU operations to be performed by the other parties given the outcome of the measurement in (ii).  Included in these instructions is a halt command which indicates the end of the protocol whenever certain sequences of measurement outcomes are obtained.  A \textbf{finite round} LOCC protocol is one that necessarily halts after $n$ rounds for some $n\in\mathbb{Z}_+$; an \textbf{infinite round} LOCC protocol is one that does not.  When a protocol halts, its resultant state is known as a halt state, and we say that some LOCC protocol obtains state $\ket{\varphi_i}$ with probability $p_i$ if this is the total probability of $\ket{\varphi_i}$ being some halt state.

\begin{comment}
\begin{figure}[t]
\includegraphics[scale=0.6]{LOCCfig}% Here is how to import EPS art
\caption{\label{LOCCfig}
A tree diagram of an LOCC protocol lasting three rounds.  Alice and Bob alternate performing two-outcome measurements in each round thus generating eight different outcome branches.} 
\end{figure}

Every LOCC protocol can be represented as a tree in which each node represents a measurement and the nodes are aligned according to which round their measurement takes place (see Fig. \ref{LOCCfig}).  Each possible post-measurement state is represented by an edge which is either connected to some node in the next round, or it is a halt state and is identified as such.  We will say that one \textbf{branch} in an LOCC protocol refers to a specific path starting from the initial state, or first node of the tree, and ending at a halt state.  Each branch has a certain probability of being realized with the sum of all branch probabilities equaling unity.
\end{comment}

Our discussion focuses on converting the three qubit W state into two qubit entanglement by LOCC.  Any tripartite entangled state obtainable from $\ket{W}$ by LOCC with some nonzero probability is said to belong to the W-class of states \cite{Dur-2000a}, and up to an LU transformation, it will have a unique representation of the form $\sqrt{x_0}\ket{000}+\sqrt{x_1}\ket{100}+\sqrt{x_2}\ket{010}+\sqrt{x_3}\ket{001}$.  This uniqueness allows us to identify each W-class state with the three-component vector
\begin{align}
\vec{x}=(x_1&,x_2,x_3)\notag\\
&\updownarrow\notag\\
\sqrt{x_0}\ket{000}+\sqrt{x_1}\ket{100}+&\sqrt{x_2}\ket{010}+\sqrt{x_3}\ket{001},
\end{align} and $x_0=1-\sum_{i=1}^Nx_i$ \cite{Kintas-2010a}.  We will use this notation throughout the paper.

For two qubit pure states, a useful quantifier of entanglement is the concurrence measure \cite{Wootters-1998a}.  When parties $i$ and $j$ share the state $\ket{\varphi^{(ij)}}$, we denote its concurrence by $C(\varphi^{(ij)})$.  A maximally entangled state $\ket{\Phi^{(ij)}}$ satisfies $C(\Phi^{(ij)})=1$.  Because the concurrence is invariant under LU operations and we disregard such operations when counting the rounds of LOCC, the concurrence suffices in characterizing our halt states.  We can now state our first result concerning feasibility of transformation \eqref{Eq:trans}.

\begin{theorem}
\label{Thm1}
Let $C(\varphi^{(BC)})=t$ and consider transformation \eqref{Eq:trans}.  {\bf \upshape (I)} If $t>\sqrt{\frac{1}{2}}$, then there exists no finite round LOCC protocol such that $p_{AB}+p_{AC}+p_{BC}=1$.  Conversely, if $t\leq\sqrt{\frac{1}{2}}$, then there exists a protocol of three rounds satisfying $p_{AB}+p_{AC}+p_{BC}=1$; at least four are required when $p_{ij}>0$ for all pairs.  {\bf \upshape (II)} For any fixed $t\in(0,\sqrt{\frac{1}{2}}]$, if there is an $n$ round LOCC protocol $\mathcal{P}$ that achieves the transformation with $p_{AB}+p_{AC}+p_{BC}=1$, then there exists an $n+2$ round protocol $\mathcal{P}'$ that achieves the transformation with $p'_{AB}+p'_{AC}+p_{BC}'=1$ and $p'_{AB}+p'_{AC}>p_{AB}+p_{AC}$.
\end{theorem}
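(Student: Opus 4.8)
The plan is to prove (II) constructively, by performing \emph{tree surgery} on $\mathcal{P}$: I take the given $n$-round protocol, locate a place where it commits probability to the $BC$ outcome, and splice in a short gadget that diverts a small but positive amount of that probability into the $AB$ and $AC$ outcomes at the cost of two extra rounds. First I would set up the W-class bookkeeping. Writing each halt state in the standard form $\vec{x}=(x_1,x_2,x_3)$, the pairwise concurrences are $C_{ij}=2\sqrt{x_ix_j}$, so the three admissible outcomes correspond to $(1/2,1/2,0)$, $(1/2,0,1/2)$, and $(0,x_2,x_3)$ with $2\sqrt{x_2x_3}=t$. A $BC$ halt state has A disentangled, i.e. $x_1=0$; tracing backwards the branch that produces it, let $\vec{y}$ be the state at the last node at which A is still entangled.

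The key structural fact is that $\vec{y}$ has all three components strictly positive: A is still entangled there ($y_1>0$), while the eventual $BC$ pair is genuinely entangled ($t>0$), which forces $y_2,y_3>0$ since neither B nor C can already be disentangled at $\vec{y}$. Consequently all three pairwise concurrences $2\sqrt{y_iy_j}$ are nonzero, so by Procrustean filtering a maximal $AB$ pair and a maximal $AC$ pair are each reachable from $\vec{y}$ with positive probability. This is what makes the diversion possible at all: the $AB$ and $AC$ branches must terminate at \emph{exactly} the maximal states $\ket{\Phi^{(AB)}},\ket{\Phi^{(AC)}}$, and the nonvanishing concurrences supply the entanglement budget to reach them.

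Next I would build the gadget that replaces the tail of this branch. On its ``common'' branch it reproduces the original $BC$ outcome with slightly smaller probability; on two ``rare'' branches it distills a maximal $AB$ pair and a maximal $AC$ pair. Concretely, B performs a weak two-outcome measurement whose rare outcome nudges the state toward $y_1=y_2$, C performs the analogous weak measurement toward $y_1=y_3$, and a final local operation by the disentangling party either completes a maximal pair on the rare branches (by annihilating the residual vacuum amplitude once the Schmidt weights are equalized) or collapses A to leave a $BC$ state. The disentangling operator in the common branch carries a free parameter, and since it produced concurrence exactly $t$ before the perturbation, a continuity (intermediate-value) argument lets me retune it so that $C_{BC}=t$ holds exactly afterward. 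The measurement strengths are free and taken small, so the two rare branches carry total probability $\delta>0$ drawn from what was $BC$ weight, and normalization forces the gadget branches to sum to the original weight of the node. Leaving every other branch of $\mathcal{P}$ untouched yields $\mathcal{P}'$ with $p'_{BC}=p_{BC}-q\delta$ and $p'_{AB}+p'_{AC}=p_{AB}+p_{AC}+q\delta>p_{AB}+p_{AC}$, with every $BC$ halt state still at concurrence $t$, all probabilities summing to one, and the affected branches lengthened by two rounds so that $\mathcal{P}'$ halts within $n+2$ rounds.

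The hard part will be arranging the gadget so that \emph{every} one of its branches terminates at one of the three admissible halt states within the two-round budget. A naive single-round diversion leaves the complementary measurement outcomes in states that are neither maximal pairs nor concurrence-$t$ $BC$ states, and cleaning these up is precisely what the second extra round buys; verifying that two rounds always suffice, and that the $AB$/$AC$ branches can be driven to exactly maximal pairs from the depleted state $\vec{y}$ rather than merely close to them, is where the explicit W-class transformation rules must be invoked in full. Establishing that $\delta$ can be chosen strictly positive while simultaneously meeting all the equality constraints (two maximal pairs together with $C_{BC}=t$) is the crux, which I expect to handle by a dimension count and continuity argument around the original disentangling operator, since that operator already satisfies the $C_{BC}=t$ constraint with room to spare.
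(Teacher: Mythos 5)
You prove only part (II); part (I) --- both the impossibility claim for $t>\sqrt{\tfrac{1}{2}}$ and the explicit three-/four-round protocols for $t\leq\sqrt{\tfrac{1}{2}}$ --- is nowhere addressed, so the proposal is at best half a proof. Moreover, the part (II) argument has a genuine gap, and it is precisely the step you defer to a ``dimension count and continuity argument.'' The governing constraint is the Gour \emph{et al.} condition together with monotonicity of $C_a^{(A)}$: a branch can terminate with \emph{all} halt states being $BC$ states of concurrence $\geq t$ only if its current state satisfies $C_a^{(A)}\geq t$, and no local measurement can increase $C_a^{(A)}$ on average. Your splice point $\vec{y}$ (last node at which A is entangled on a $BC$ branch) generically carries zero or even negative slack in this quantity. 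In the paper's own four-round protocol the two such nodes are $\ket{\phi_2}$, with $C_a^{(A)}(\phi_2)=\tfrac{2t}{3-\sqrt{1-t^2}}<t$, and $\ket{\phi_{2,2,2}}$, with $C_a^{(A)}=t$ exactly. Take $t=\tfrac12$ and splice at $\ket{\phi_{2,2,2}}$: the rare outcome of B's weak measurement must have $x_1=x_2$ \emph{exactly} (``nudged toward'' is not enough for C to then produce $\ket{\Phi^{(AB)}}$), and that outcome is the W state, with $C_a^{(A)}=\tfrac23>t$; monotonicity then forces the common outcome strictly below $t$. From there no continuation of any length can end purely in $BC$ states of concurrence $\geq t$, and no retuning of A's disentangling operator can help --- $C_a^{(A)}<t$ is a violated necessary condition, not a missed equality that an intermediate-value argument can restore. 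The common branch would then need its own $AB$/$AC$ diversion, and the surgery cascades past the two-round budget. So the claim that ``the nonvanishing concurrences supply the entanglement budget'' is false: positivity of the components buys SLOCC reachability of the maximal pairs, but not the slack needed to keep every complementary branch admissible, and at your splice point that slack simply does not exist.

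The paper's proof of (II) works because it splices at a structurally forced location and replaces perturbation by an exact consolidation argument. In any finite protocol with $p_{AB}+p_{AC}>0$ there is a \emph{last} round $m$ at which $\ket{\Phi^{(AB)}}$ or $\ket{\Phi^{(AC)}}$ occurs; part (I)'s analysis forces its pre-measurement state to be $\bigl(\tfrac{1-s}{2},\tfrac{1-s}{2},s\bigr)$ with Charlie acting, and every other outcome $\vec{x}_i$ to obey $C_a^{(A)}(\vec{x}_i)\geq t$. The average-invariance of $x_2$ and $x_0+x_3$ under Charlie's measurements plus Cauchy--Schwarz yields
\begin{equation*}
\sqrt{2(1-s-q_{AB})s}\;\geq\;(1-q_{AB})\,t,
\end{equation*}
which says the original protocol's splitting into many outcomes \emph{wastes} concurrence of assistance: Charlie can extract $\ket{\Phi^{(AB)}}$ with the same probability $q_{AB}$ while consolidating all failure probability into a \emph{single} symmetric state $\bigl(\tfrac{1-s'}{2},\tfrac{1-s'}{2},s'\bigr)$, $s'=\tfrac{s}{1-q_{AB}}$, still satisfying $C_a^{(A)}\geq t$. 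From that state strictly more $AB$ or $AC$ probability is then extractable by explicit measurements (the $\gamma$-measurement when $s'<\tfrac12$, or rounds 2--4 of the Theorem 1 protocol when $s'\geq\tfrac12$), with all failures finishing within two further rounds. This consolidation-creates-slack mechanism, resting on part (I)'s structure theorem for round $m$, is the actual content of part (II); it cannot be recovered by local perturbation of a generic branch, so the proposal as written does not go through.
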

\begin{remark}
For part (II), the set of probabilities $p_{AB}+p_{AC}$ obtainable by $n$ round LOCC is, in fact, compact \cite{Chitambar-2011a}.  Consequently, there exists an optimal LOCC protocol $\mathcal{P}$ and therefore (II) implies that $\mathcal{P}'$ is not achievable in $n$ rounds.
\end{remark}

\begin{proof} 

(I)   

The proof makes heavy use of the result by Gour \textit{et. al} \cite{Gour-2005a} which says that a deterministic transformation $\ket{\psi}_{ijk}\to\ket{\varphi}_{ij}$ is possible iff 
\begin{equation}
\label{Eq:COA}
C^{(k)}_a(\psi^{(ijk)})\geq C(\varphi^{(ij)})
\end{equation}
where $C^{(k)}_a$ is the three-party concurrence of assistance with respect to party $k$ ``providing the assistance.''  Furthermore, when Eq. \eqref{Eq:COA} is satisfied, the transformation $\ket{\psi}_{ijk}\to\ket{\varphi}_{ij}$ requires no more than 2 rounds with only 1 round required when it is an equality \cite{Gour-2005a}.  

For a W-class state $\vec{x}=(x_1,x_2,x_3)$ with $x_0=0$, the concurrence of assistance is easily computed to be $C^{(k)}_a(\vec{x})=2\sqrt{x_ix_j}<1$, where the strict inequality follows from $x_k>0$.  For a general W-class state with $x_0\not=0$, the state $(x_1,x_2,x_3)$ can always be deterministically obtained from the state $(x_1,x_2,x_0+x_3)$ (which has no $\ket{000}$ component) \cite{Kintas-2010a}, and since $C^{(k)}_a$ is an entanglement monotone \cite{Gour-2005a}, it follows that 
\begin{equation}
\label{Eq:CaWstate}
C^{(k)}_a(\vec{x})<1
\end{equation} for any general three qubit W-class state $\vec{x}$.  

Now for some fixed value of $C(\varphi^{(BC)})=t$, consider a more general transformation in which the protocol halts after obtaining either $\ket{\Phi^{(AB)}}$, $\ket{\Phi^{(AC)}}$, or any $\ket{\tilde{\varphi}^{(BC)}}$ such that $C(\tilde{\varphi}^{(BC)})\geq C(\varphi^{(BC)})$.  It is well-known that this latter condition is both necessary and sufficient for the deterministic transformation of $\ket{\tilde{\varphi}^{(BC)}}\to\ket{\varphi^{(BC)}}$ \cite{Nielsen-1999a}.  Hence, any finite round transformation that deterministically obtains some state in the set $\{\ket{\Phi^{(AB)}}, \ket{\Phi^{(AC)}}, \ket{\tilde{\varphi}^{(BC)}}:C(\tilde{\varphi}^{(BC)})\geq C(\varphi^{(BC)})\}$ can be modified into a transformation that satisfies Theorem 1.

Suppose that this more general transformation can be accomplished in $n$ rounds and \textit{no fewer}.  Then at least one of the $n^{th}$ round pre-measurement states must be a W-class state $(x_1,x_2,x_3)$, and the $n^{th}$ round measurement on this state must convert it into either $\ket{\Phi^{(AB)}}$, $\ket{\Phi^{(AC)}}$, or states $\ket{\tilde{\varphi}^{(BC)}}$ with probability one.  From the concurrence conditions argued above (Eqns. \eqref{Eq:COA} and \eqref{Eq:CaWstate}), it follows that Alice must be the acting party in round $n$.  Consequently, there must exist some round $m<n$ such that it is the last round in which either $\ket{\Phi^{(AB)}}$ or $\ket{\Phi^{(AC)}}$ is an obtainable post-measurement halt state.  If we assume (without loss of generality) that it is the state $\ket{\Phi^{(AB)}}$, then at least one $m^{th}$ round pre-measurement state must be of the form $(\tfrac{1-s}{2},\tfrac{1-s}{2},s)$, and Charlie is the acting agent in the $m^{th}$ round.  One outcome of Charlie's measurement on this state is $\ket{\Phi^{(AB)}}$, and another will be $\vec{a}=(\tfrac{1-a_0-a_3}{2},\tfrac{1-a_0-a_3}{2},a_3)$ with $a_3>0$.  Again, because this state can be deterministically obtained from a state of the form $(\frac{1-b}{2},\frac{1-b}{2},b)$, the monotonicity of $C^{(A)}_a$ implies that $C^{(A)}_a(\vec{a})\leq 2\sqrt{\frac{(1-b)b}{2}}\leq \sqrt{\frac{1}{2}}$.  However, state $\vec{a}$ must be deterministically transformed into states $\ket{\psi_{BC}'}$ since no other types of states are obtainable after round $m$.  Hence by \eqref{Eq:COA}, it follows that $t\leq t'\leq \sqrt{\frac{1}{2}}$.

Now conversely, suppose that $t\leq\sqrt{\frac{1}{2}}$.  Consider a two-outcome measurement defined by the set $\{M_1(x),M_2(x)\}$ where 
\begin{align}
\label{Eq:Ops}
M_1(x)&=diag[\sqrt{x},0]&M_2(x)&=diag[\sqrt{1-x},0].
\end{align}
A very simple protocol that implements transformation \eqref{Eq:trans} is as follows.  In the first round, on the W state $(\tfrac{1}{3},\tfrac{1}{3},\tfrac{1}{3})$, Charlie performs $\{M_1(\frac{1}{2}),M_2(\frac{1}{2})\}$.  The first outcome is state $\ket{\Phi^{(AB)}}$ while the second is $(\frac{1}{4},\frac{1}{4},\frac{1}{2})$.  This latter state has $C^{(A)}_a=\sqrt{\frac{1}{2}}$, and so by Eq. \eqref{Eq:COA} $\ket{\varphi^{(BC)}}$ can be obtained with probability 1 in no more than 2 additional rounds.  One drawback to this protocol, however, is that $p_{AC}=0$.  A more sophisticated protocol avoids this problem.  In fact, it is easy to see that, based on the arguments we've given, the following protocol is optimal in the sense that at least four rounds are required if $p_{ij}>0$ for all pairs.  

\medskip
\noindent\textbf{- $\ket{W}$ Distillation Protocol (see Fig. \ref{Protocolfig}):}

\noindent\textbf{Round 1:}  Charlie performs the measurement $\{M_1(\alpha),M_2(\alpha)\}$ with $\alpha=\frac{2\sqrt{1-t^2}}{1+\sqrt{1-t^2}}$.  Outcome $1$ is the halt state $\ket{\Phi^{(AB)}}$, and it is reached with probability 
\begin{equation}
\label{Eq:PAB}
p_{AB}=\frac{2}{3}\alpha.
\end{equation} Outcome $2$ is $\ket{\phi_{2}}:=\frac{1}{3-2\alpha}(1-\alpha,1-\alpha,1)$.  \textbf{Round 2:}  On $\ket{\phi_{2}}$, Alice performs the same measurement $\{M_1(\alpha), M_2(\alpha)\}$.  Outcome $1$ is the halt state $\frac{1}{2-\alpha}(0,1-\alpha,1)$ obtained with total probability 
\begin{equation}
\label{Eq:PBC}
p_{BC}=\frac{2}{3}\alpha-\frac{1}{3}\alpha^2. 
\end{equation}
 This state has a concurrence of $t$.  For outcome $2$, the residual state is $\ket{\phi_{2,2}}:=\frac{1}{3-\alpha}(1,1-\alpha,1)$.  \textbf{Round 3:}  On $\ket{\phi_{2,2}}$, Bob performs the measurement $\{M_1(\beta),M_2(\beta)\}$ with $1-\beta=(1-\alpha)s$ where $s=\left(\frac{1+\sqrt{1-2t^2}}{2t}\right)^2$.  Outcome 1 outputs $\ket{\Phi^{(AC)}}$ with some nonzero probability, while the outcome 2 state is $\ket{\phi_{2,2,2}}:=\frac{1}{1+2s}(s,1,s)$.  It can readily be checked that $1+2s=\frac{2\sqrt{s}}{t}$ so that $\ket{\phi_{2,2,2}}$ has $C^{(A)}_a$ equaling $t$.  \textbf{Round 4:}
On $\ket{\phi_{2,2,2}}$ Alice is able to make a projective measurement that deterministically outputs $\ket{\varphi^{(BC)}}$ \cite{Gour-2005a}.

\begin{figure}[t]
\includegraphics[scale=0.5]{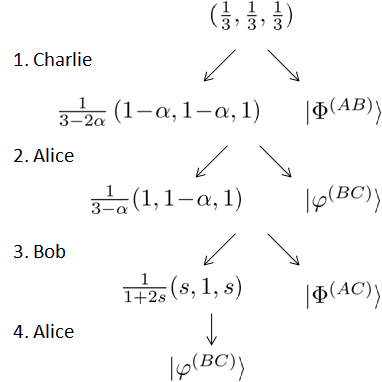}% Here is how to import EPS art
\caption{\label{Protocolfig}
A four round protocol that achieves transformation \eqref{Eq:trans} with total probability one.  Each party performs a two-outcome measurement and the resultant states are given.  Here, $(x_1,x_2,x_3)$ is notation for the state $\sqrt{x_1}\ket{100}+\sqrt{x_2}\ket{010}+\sqrt{x_3}\ket{001}$.  Also, $C(\varphi^{(BC)})=t$, $\alpha=\tfrac{2\sqrt{1-t^2}}{1+\sqrt{1-t^2}}$, and $s=(\tfrac{1+\sqrt{1-2t^2}}{2t})^2$.} 
\end{figure}

(II)  As in the proof of (I), for any $n$ round protocol $\mathcal{P}$ in which $p_{AB}+p_{AC}+p_{BC}=1$, there must exist some last round $m<n$ in which either $\ket{\Phi^{(AB)}}$ or $\ket{\Phi^{(AC)}}$ is a post-measurement halt state.  Consequently, after $m$ rounds the total probability of Alice sharing a maximally entangled state with either Bob or Charlie is $p_{AB}+p_{AC}$.   Now assuming again (without loss of generality) that $\ket{\Phi^{(AB)}}$ is an $m^{th}$ round halt state, in protocol $\mathcal{P}$ there must be some $m^{th}$ round pre-measurement state of the form $(\frac{1-s}{2},\frac{1-s}{2},s)$.   From this state, $\ket{\Phi^{(AB)}}$ is obtained with probability $q_{AB}$, and states $\vec{x}_i=(x_{1i},x_{2i},x_{3i})$ satisfying $C_a^{(A)}(\vec{x}_i)\geq t$ are each obtained with probability $q_i$.  It is not difficult to see that when Charlie acts, the quantities $x_2$ and $x_0+x_3$, always remain invariant on average \cite{Kintas-2010a}.  This implies $\frac{1-s}{2}=\frac{q_{AB}}{2}+\sum_iq_ix_{2i}$ and $s=\sum_iq_i(x_{0i}+x_{3i})$.
From these two equations and the fact that $2\sqrt{x_{2i}(x_{0i}+x_{3i})}\geq C_a^{(A)}(\vec{x}_i)\geq t$ for all $\vec{x}_i$, we have
\begin{align}
\label{Eq:CASW}
\sqrt{2(1-s-q_{AB})s}&=2\sqrt{\sum_iq_ix_{2i}}\sqrt{\sum_iq_i(x_{0i}+x_{3i})}\notag\\
&\geq \sum_iq_i2\sqrt{x_{2i}(x_{0i}+x_{3i})}\notag\\
&\geq(1-q_{AB})t
\end{align}
where we have used the Cauchy-Schwartz inequality.

Protocol $\mathcal{P}'$ follows $\mathcal{P}$ on all instructions except the $m^{th}$ round measurement on the state $(\frac{1-s}{2},\frac{1-s}{2},s)$.  Here, Charlie performs the measurement $\{M_1(\delta),M_2(\delta)\}$ with $\delta=\frac{q_{AB}}{1-s}$.  The first outcome is $\ket{\Phi^{(AB)}}$ with probability $q_{AB}$, and the other is $(\frac{1-s'}{2},\frac{1-s'}{2},s')$ where $s'=\frac{s}{1-q_{AB}}$.  The $C^{(A)}_a$ of this state is $\sqrt{2s'(1-s')}=\frac{1}{1-q_{AB}}\sqrt{2s(1-s-q_{AB})}\geq t$ by Eq. \eqref{Eq:CASW}.  On the state $(\frac{1-s'}{2},\frac{1-s'}{2},s')$, the following measurement scheme is implemented.  If $s'<\frac{1}{2}$, Charlie performs $\{M_1(\gamma), M_2(\gamma)\}$ with $\gamma=\frac{1-2s'}{(1-s')^2}$. Outcome 1 is $\ket{\Phi^{(AB)}}$ occurring with some nonzero probability while outcome 2 is a state with $C^{(A)}_a=\sqrt{2s'(1-s')}\geq t$.  From here, $\ket{\varphi^{(BC)}}$ can be obtained in no more than 2 rounds.  If $s\geq\frac{1}{2}$, then the state of the same form as $\ket{\psi_2}$ in the above protocol.  Rounds 2, 3, and 4 of that protocol are then executed on $(\sqrt{\frac{1-s'}{2}},\sqrt{\frac{1-s'}{2}},s')$ yielding $\ket{\Phi^{(AC)}}$ with a nonzero probability.  In total then, the new protocol $\mathcal{P}'$ runs no more than $n+2$ rounds with $p'_{AB}+p'_{AC}>p_{AB}+p_{AC}$.
\end{proof}
\begin{remark}
In part (II), it is crucial that Bob and Charlie are entangled for all other outcomes besides $\ket{\Phi^{(AB)}}$ and $\ket{\Phi^{(AC)}}$.  Indeed, consider the W state $(\frac{1}{3},\frac{1}{3},\frac{1}{3})$ where Charlie performs a projective measurement in the $\{\ket{0},\ket{1}\}$ basis.  In this case, $p_{AB}=\frac{2}{3}$, and there can be no other transformation with $p'_{AB}+p'_{AC}>\frac{2}{3}$ since the maximum probability of distilling 1 Ebit across the $A:BC$ cut is $\frac{2}{3}$ for the W state \cite{Lo-1997a}.
\end{remark}

We thus have an upper bound on the entanglement present in $\ket{\varphi^{(BC)}}$ if transformation $\eqref{Eq:trans}$ is going to be accomplished with probability one in finite rounds.  The following theorem shows how much stronger infinite round LOCC is drastically able to surpass this bound.

\begin{theorem}
Let $C(\varphi^{(BC)})=t$.  Transformation \eqref{Eq:trans} is possible in infinite round LOCC with $p_{AB}+p_{AC}+p_{BC}=1$ if and only if $t<1$.
\end{theorem}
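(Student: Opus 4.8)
The statement is an equivalence, $t<1 \Leftrightarrow$ achievability, so I would prove the two implications separately and expect the impossibility direction (the case $t=1$) to carry the real difficulty.

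For the achievability direction I would build an explicit infinite-round protocol generalizing the Fortescue--Lo random-distillation scheme. The idea is to keep the shared state inside the family of $B$--$C$ symmetric W-class vectors $(x_1,x_2,x_2)$ and repeatedly perform weak two-outcome measurements of the type $\{M_1,M_2\}$ used in part (I). At (near-)symmetric configurations, a ``drop-out'' outcome by Charlie or by Bob emits a maximally entangled $\ket{\Phi^{(AB)}}$ or $\ket{\Phi^{(AC)}}$, while at configurations with $C^{(A)}_a=2\sqrt{x_2x_3}\geq t$ Alice can deterministically finalize $\ket{\varphi^{(BC)}}$ by Eq. \eqref{Eq:COA}. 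The complementary ``continue'' outcomes feed back into the same family with reduced total weight. The crucial point is that, because $t<1$, the condition $C^{(A)}_a\geq t$ holds on an \emph{open} set of states (those with $x_1\leq 1-t$), so Alice's finalizing step is genuinely deterministic and loses no probability; this slack is exactly what lets the non-halting (``nobody has dropped out yet'') probability be driven to zero rather than merely made small. I would then verify that the per-cycle drop-out probability stays bounded below, so the infinite product of continue-probabilities vanishes and $p_{AB}+p_{AC}+p_{BC}=1$ is attained, and finally tune the measurement strengths to hit any prescribed admissible triple $(p_{AB},p_{AC},p_{BC})$.

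For the impossibility direction I would argue that at $t=1$ this slack disappears. A maximally entangled $\ket{\varphi^{(BC)}}$ corresponds to the vector $(0,\tfrac12,\tfrac12)$ with $C^{(A)}_a=1$, whereas any genuine W-class state carrying $A$-excitation satisfies $C^{(A)}_a=2\sqrt{x_2x_3}\leq x_2+x_3=1-x_1<1$. Hence, by \eqref{Eq:COA}, no state with $x_1>0$ can be deterministically converted into a maximally entangled $B$--$C$ pair: such a pair can appear only as a singular, rank-deficient measurement outcome in which Alice projects her qubit onto $\ket{0}$ from a state with $x_2=x_3$, and this necessarily spawns complementary ``continue'' branches on which $x_1$ has strictly increased. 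Meanwhile the $A|BC$-cut concurrence gives $p_{AB}+p_{AC}\leq C_{A|BC}(\ket{W})=\tfrac{2\sqrt2}{3}<1$, forcing $p_{BC}>0$, so these singular finalizations must occur with positive total probability. The heuristic is then that the leftover $A$-excitation generated by each finalization cannot be removed without either producing more entangled pairs (which reintroduces the same problem) or leaking into product states, so the non-halting probability stays bounded away from $0$.

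The main obstacle is making this last step rigorous, because the natural bookkeeping quantities do not close the argument by themselves. The excitation weight $x_1$ is a martingale under every party's action (for states with $x_0=0$), and $C^{(A)}_a$ is a martingale under Alice and a supermartingale under Bob and Charlie; together the three martingales $x_1,x_2,x_3$ pin the target probabilities to $p_{AB}=p_{AC}=p_{BC}=\tfrac13$ but yield no contradiction, and I expect to check that no single entanglement monotone can separate all three Bell-type targets from $\ket{W}$. I therefore anticipate a genuine limiting argument: assume $p_{AB}+p_{AC}+p_{BC}=1$ is attained, truncate the protocol after $n$ rounds so that the not-yet-halted weight $w_n\to 0$, and show, using the strict bound $C^{(A)}_a<1$ together with the part-(I) analysis of the last round at which an $\ket{\Phi^{(AB)}}$ or $\ket{\Phi^{(AC)}}$ halt state appears, that completing the remaining probability into exactly maximally entangled $B$--$C$ pairs would force $w_n$ to remain bounded below by a positive constant, contradicting $w_n\to 0$.
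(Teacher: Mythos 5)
Your achievability half ($t<1$) is essentially the paper's own argument. The paper implements exactly the cyclic Fortescue--Lo-type scheme you outline: Charlie, Alice, and Bob in turn apply the measurement $\{M_1(\alpha),M_2(\alpha)\}$ of Eq.~\eqref{Eq:Ops} with the fixed strength $\alpha=\tfrac{2\sqrt{1-t^2}}{1+\sqrt{1-t^2}}>0$, so that each three-round cycle halts on $\ket{\Phi^{(AB)}}$, $\ket{\varphi^{(BC)}}$, or $\ket{\Phi^{(AC)}}$, or else returns the state to $\ket{W}$ with probability $(1-\alpha)^2<1$, after which the cycle repeats; summing the geometric series gives $p_{AB}+p_{AC}+p_{BC}=1$. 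Your ``slack'' observation --- that $t<1$ is what permits a per-cycle halting probability bounded away from zero --- is precisely why the paper can take $\alpha>0$, so this direction is correct, just less explicit than the paper's.

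The genuine gap is the impossibility direction at $t=1$, which you candidly leave as a heuristic, and the completion you anticipate would not work as described. The part-(I) technique you propose to recycle hinges on the existence of a \emph{last} round $m$ in which $\ket{\Phi^{(AB)}}$ or $\ket{\Phi^{(AC)}}$ occurs as a halt state; that exists for any finite-round protocol but need not exist for an infinite-round one --- indeed, in the paper's own Theorem~2 protocol these halt states recur in every cycle forever --- so truncating after $n$ rounds and applying the part-(I) analysis to the tail cannot produce the positive lower bound on the unhalted weight $w_n$ that you need. And, as you yourself note, the martingale bookkeeping in $(x_1,x_2,x_3)$ is consistent with $p_{AB}=p_{AC}=p_{BC}=\tfrac13$ and yields no contradiction. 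What is actually required is a quantity that strictly separates $\ket{W}$ from the target ensemble and is monotone on average under every local measurement, including the rank-deficient finalizing outcomes you identify; constructing it is a substantial standalone result. The paper does not prove this direction either: it disposes of it in one sentence by citing Ref.~\cite{Chitambar-2011b}, where the deterministic impossibility at $t=1$ is established. So, read as a self-contained proof, your proposal is incomplete exactly on the ``only if'' direction; to match the paper you would either cite that result or supply the monotone argument it contains.
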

\begin{proof}
It has recently been shown that the transformation is deterministically impossible when $t=1$ \cite{Chitambar-2011b}, so to prove the theorem, we will construct an explicit protocol for any $t<1$.  The protocol is analogous to the one given in Ref. \cite{Fortescue-2007a}: each party performs the same measurement one after the other, and either a target state is obtained, or the original state $\ket{W}$ is recovered; in the latter case, the measurement process is repeated.  In fact, the protocol facilitating the desired transformation only requires changing Bob's measurement in the protocol given for Theorem 1.  To recapitulate, Charlie makes the measurement \eqref{Eq:Ops} of $\{M_1(\alpha),M_2(\alpha)\}$ with $\alpha=\frac{2\sqrt{1-t^2}}{1+\sqrt{1-t^2}}>0$.  Outcome 1 is the halt state $\ket{\Phi^{(AB)}}$ and occurs with probability $p_{AB}$.  If the outcome is 2, Alice performs the same measurement $\{M_1(\alpha), M_2(\alpha)\}$.  Outcome 1 is the halt state $\ket{\varphi^{(BC)}}$ and occurs with total probability $p_{BC}$ while outcome 2 is $\ket{\phi_{2,2}}$.  In the new protocol, the third round consists of Bob performing yet again the measurement $\{M_1(\alpha),M_2(\alpha)\}$ on $\ket{\phi_{2,2}}$.  Outcome 1 is the halt state $\ket{\Phi^{(AC)}}$ and occurs with total probability 
\begin{equation}
\label{Eq:PAC}
p_{AC}=\frac{2}{3}\alpha(1-\alpha),
\end{equation}
while outcome 2 is the W state $(\tfrac{1}{3},\tfrac{1}{3},\tfrac{1}{3})$ and occurs with total probability
\begin{equation}
\label{Eq:PW}
p_W=\left(1-\alpha\right)^2.
\end{equation}
In latter case, the entire protocol is then repeated, and this continues for an indefinite number of rounds.  The total probabilities for this infinite round protocol can be calculated from Eqns. \eqref{Eq:PAB}, \eqref{Eq:PBC}, \eqref{Eq:PAC}, and \eqref{Eq:PW}.  For instance we have
\begin{align}
p_{AB}(total)&=\frac{2}{3}\alpha+\left(1-\alpha\right)^2\Bigg(\frac{2}{3}\alpha+ \left(1-\alpha\right)^2\bigg(\frac{2}{3}\alpha+...\notag\\
&=\frac{2}{3}\alpha\sum_{k=0}^\infty\left(1-\alpha\right)^{2k}=\frac{2}{3}\left(\frac{1}{2-\alpha}\right),
\end{align}
and likewise
\begin{align}
p_{BC}(total)&=\left(\frac{2}{3}\alpha-\frac{1}{3}\alpha^2\right)\frac{1}{1-\left(1-\alpha\right)^2}=\frac{1}{3}\notag\\
p_{AC}(total)&=\frac{2}{3}\alpha(1-\alpha)\frac{1}{1-\left(1-\alpha\right)^2}=\frac{2}{3}\left(\frac{1-\alpha}{2-\alpha}\right).
\end{align}
In total we have $p_{BC}(total)+p_{AC}(total)+p_{AB}(total)=1$ which proves the theorem.
\end{proof}

\medskip
\noindent\textbf{Concluding Remarks:}

We have shown that for the range $\sqrt{\frac{1}{2}}< t<1$, transformation \eqref{Eq:trans} requires an infinite number of rounds to be achieved deterministically, while for $0\leq t\leq\sqrt{\frac{1}{2}}$, the transformation can be implemented in only 4 rounds (see Fig. \ref{roundplot}).  Furthermore, if the goal is to distill Ebits across any bipartitioning, say $A:BC$, while still preserving bipartite entanglement between BC after any failure measurement, it is always more useful to perform $n+2$ round of LOCC instead of just $n$.  

A final point to note is that the discontinuity in Fig. \ref{roundplot} only occurs for deterministic transformations ($p_{AB}+p_{AC}+p_{BC}=1$).  If we allow the total success probability to drop to $1-\epsilon$ for any $\epsilon>0$, the Fortescue-Lo Protocol is a finite round protocol that will achieve this transformation, even when $C(\varphi^{(BC)})=1$ \cite{Fortescue-2007a}.  The protocol given in Theorem 2 reduces to this if the parties measure with $\{M_1(\epsilon),M_2(\epsilon)\}$ and Bob and Charlie transform $\ket{\varphi^{(BC)}}\to\ket{\Phi^{(BC)}}$ after Alice's measurement.  However, the potential success of this adds an additional factor of $\tfrac{2-2\epsilon}{2-\epsilon}$ onto $p_{BC}$.  Evaluating the geometric sum up to $n-1$ rounds then yields a total success probability of $\frac{6-4\epsilon}{6-3\epsilon}(1-(1-\epsilon)^{2n})$.  Therefore, the transformation can be achieved with probability at least $1-\epsilon$ by using a protocol having $O(1/\epsilon)$ rounds.

\begin{acknowledgments}
I'd like to thank Andreas Winter for discussing motivating ideas as well as Debbie Leung and Marco Piani for providing helpful comments.  Additional thanks goes to Hoi-Kwong Lo and Wei Cui for carefully reading the article and taking part in related discussions.  This work was supported by funding agencies including CIFAR, the CRC program, NSERC, and QuantumWorks.
\end{acknowledgments}

\bibliography{EricQuantumBib}

\end{document}